\newtheorem{prop}{Proposition}
\newtheorem{defn}{Definition}
\newtheorem{lemma}{Lemma}
\newtheorem{theorem}{Theorem}
\newtheorem*{proof}{Proof}
\newtheorem*{rem}{Remark}
\title{Opinion Maximization in Social Trust Networks}
\author{
Pinghua Xu$^{1,2}$\and
Wenbin Hu$^{1,2}$\footnote{Corresponding authors.}\and
Jia Wu$^3$\and
Weiwei Liu$^1$\footnotemark[1]\\
\affiliations
$^1$School of Computer Science, National Engineering Research Center for Multimedia Software and Institute of Artificial Intelligence, Wuhan
University\\
$^2$Shenzhen Research Institute, Wuhan University\\
$^3$Department of Computing, Macquarie University\\
\emails
\{xupinghua, hwb\}@whu.edu.cn,
jia.wu@mq.edu.au,
liuweiwei863@gmail.com
}
\begin{document}

\maketitle

\begin{abstract}
Social media sites are now becoming very important platforms for product promotion or marketing campaigns. Therefore, there is broad interest in determining ways to guide a site to react more positively to a product with a limited budget. However, the practical significance of the existing studies on this subject is limited for two reasons. First, most studies have investigated the issue in oversimplified networks in which several important network characteristics are ignored. Second, the opinions of individuals are modeled as bipartite states (e.g., support or not) in numerous studies, however, this setting is too strict for many real scenarios. In this study, we focus on social trust networks (STNs), which have the significant characteristics ignored in the previous studies. We generalized a famed continuous-valued opinion dynamics model for STNs, which is more consistent with real scenarios. We subsequently formalized two novel problems for solving the issue in STNs. Moreover, we developed two matrix-based methods for these two problems and experiments on real-world datasets to demonstrate the practical utility of our methods.
\end{abstract}

\section{Introduction}

On social sites, users can express their opinions on a product, a social event, or many other information items.
By quantifying opinions with numerical values, the overall opinion of a network can be determined as the sum total of the opinions of all individuals. This value reflects the emotional tendency of the site toward an information item. Knowing the overall opinion is critical for many economic and academic applications, such as marketing campaigns and public voice control. 
% Thus, the subsequent problem of how to regulate the overall opinion on a limited budget, especially opinion maximization, is of wide concern.
For example, suppose that we are running a mobile phone company and have just released a new phone. In order to boost sales, it is vital to guide the market to react more positively (i.e., have a more positive overall opinion) to our product with a limited budget. We call this issue \textit{opinion maximization}. It is the general form of influence maximization \cite{li2018influence} and the main difference is that we consider opinions in continuous-valued states rather than bipartite states, which is too strict in many real scenarios.

\begin{figure}[t]
	\centering
	\includegraphics[scale=0.45]{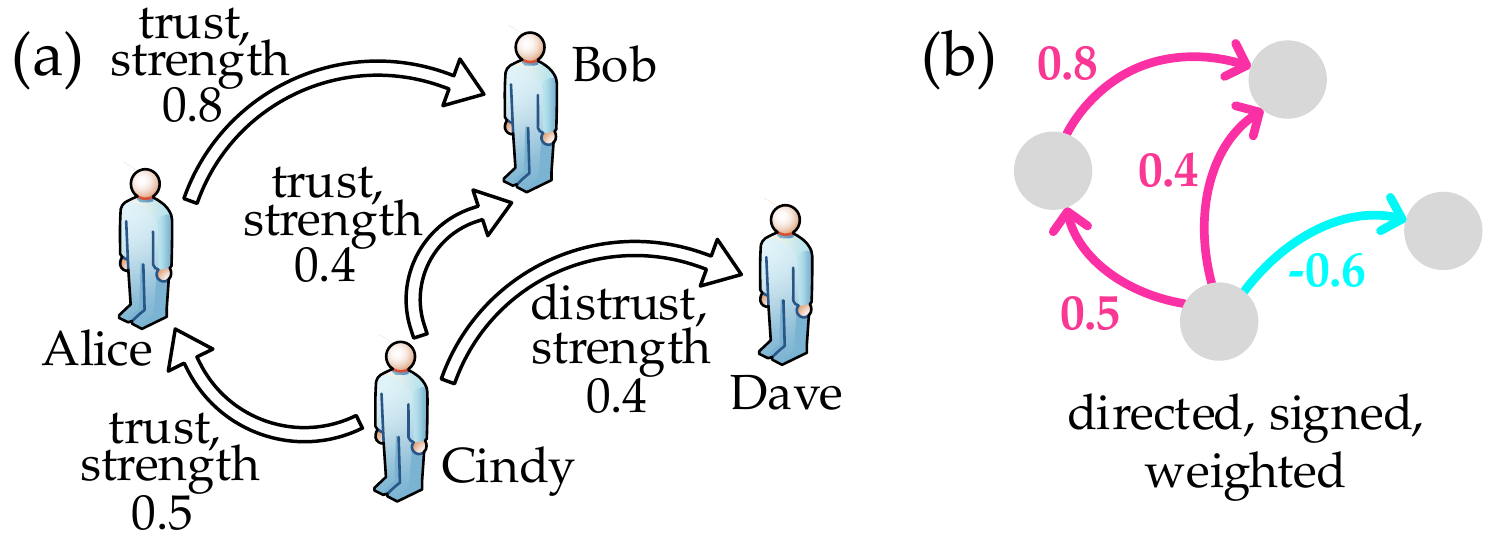}
	\caption{(a) A social trust network. (b) Graph representation of (a).}
	\label{f:STN}
\end{figure}

Different from most existing studies on opinion or influence maximization \cite{chen2016robust,yang2016continuous,Abebe2018opinion,liu2018active}, we
argue that it is more realistic and valuable to study this issue in a social trust network (STN, see Figure \ref{f:STN}), which models the individuals and the trust or distrust relationships between them and can be represented by a \textit{directed} \textit{signed} \textit{weighted} graph \cite{kumar2016edge,xu2019social}. The main reason for this is the fact that the following three important network characteristics have usually not been reflected in the target networks of the existing studies simultaneously.
\begin{itemize}
    \item A lot of normal users tend to be influenced by key opinion leaders (KOLs), but KOLs are hardly influenced by normal users. This characteristic is reflected in the directionality of STNs.
    
    \item Users can be influenced by others to varying degrees and this characteristic is reflected in the absolute weights of edges in STNs.
    
    \item Relationships between users may have opposite polarities. This characteristic is reflected in the signs of the edges' weights in STNs.
\end{itemize}

Given the target networks, we need to model how opinions of individuals form, then we can design strategies to maximize the overall opinion. Therefore, we have chosen to generalize a famed \textit{continuous-valued} opinion dynamics model \cite{Friedkin1990Social} for STNs for three reasons:
\begin{itemize}
    \item Opinions cannot be accurately modeled by discrete states, such as being absolutely supportive or opposed, in many real scenarios. For example, one may give a little support or strong support to a new iPhone.
    
    \item Opinion dynamics better resembles a social game in which individuals constantly influence each other rather than a discrete cascading process \cite{Gionis2013Opinion}.
    
    \item In addition to the trust relationships, distrust relationships also play an important role in opinion dynamics \cite{li2013influence,li2017positive}. But they are usually ignored in existing studies of opinion dynamics.
\end{itemize}

As an STN has several complex network characteristics and the opinion dynamics model used is more realistic and complex, opinion maximization becomes much more difficult in STNs. Therefore, existing studies are not applicable for STNs.
In this study, we formalize two novel problems for solving opinion maximization in STNs. They are, with a limited budget, how to maximize the overall opinion by changing internal opinions of some people (IOP) or by fixing the expressed opinions of some people (EOP). In IOP, we observe that people play different roles in the social game, i.e., their internal opinions contribute differently to the overall opinion. This phenomenon is not observed in oversimplified networks. In addition, we define a novel contribution index to quantify the contribution ability, then design a simple but effective method for IOP. The expressed opinion problem is NP-hard, hence we design a greedy method to achieve an acceptable solution. Iteratively calculating the benefit of fixing a node's expressed opinion is time-consuming, but we integrate the calculations into a simple expression via crafty matrix operations. That makes our method efficient. We evaluate our methods on several real-world datasets and the experimental results demonstrate the practical utility of the proposed methods. The contributions of this study can be summarized as follows.\footnote{The implementations of the methods are available at https://github.com/WHU-SNA/OpMaxInSTN}
\begin{itemize}
    \item This is the first study that investigates opinion maximization in STNs with a continuous-valued opinion dynamics model. We formalize two novel problems, IOP and EOP, for maximizing the overall opinion of an STN.
    
    \item We define a novel contribution index to quantify the nodes' variant contribution abilities to the overall opinion, which are unobserved in oversimplified networks. We also develop a simple method for IOP that can easily achieve optimal performance. 
    
    \item We prove the hardness of EOP. By integrating the vast computation via crafty matrix operations, we develop an efficient method for EOP.
\end{itemize}

The rest of this paper is organized as follows. The preliminaries include an explanation of our notation and the opinion dynamics model is described in Section 2. The problems and the proposed methods are presented in Section 3. Section 4 details the experiments, followed by the conclusions in Section 5.

\section{Preliminaries}

\subsection{Notation}
A social trust network is represented by a \textit{directed signed weighted} graph $\mathcal{G}=(\mathcal{V},\mathcal{E})$, where $\mathcal{V}$ is the set of nodes and $\mathcal{E}$ is the set of edges. Each directed edge from $i$ to $j$, $(i,j) \in \mathcal{E}$, is associated with the weight $w_{ij}\in [-1,1]$, which signifies the strength of the relationship. Let $\textbf{A}$ denote the adjacency matrix of $\mathcal{G}$, where $a_{ij}=w_{ij}$ if $(i,j)\in \mathcal{E}$, $a_{ij}=0$ otherwise. Let $\textbf{D}$ denote the row connectivity matrix, where $d_{ii}=\begin{matrix} \sum_{j} |a_{ij}| \end{matrix}$. Then $\bar{\textbf{L}}=\textbf{D}-\textbf{A}$ denotes the generalized Laplacian matrix \cite{hu2013bipartite}.

\begin{lemma}
Inequality $0 \leq \min sp(\bar{\textbf{L}}_u) \leq Re(sp(\bar{\textbf{L}})) \leq \max sp(\bar{\textbf{L}}_u) $, where $\bar{\textbf{L}}_u = \nicefrac{(\bar{\textbf{L}} + \bar{\textbf{L}}^T)}{2}$, $sp(*)$ indicates the spectra and $Re(*)$ indicates the real parts.
\label{lemma:1}
\end{lemma}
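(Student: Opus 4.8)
The plan is to split the claim into two independent parts: the two middle inequalities $\min sp(\bar{\textbf{L}}_u)\le Re(sp(\bar{\textbf{L}}))\le \max sp(\bar{\textbf{L}}_u)$, which relate the real parts of the (possibly complex) spectrum of $\bar{\textbf{L}}$ to the spectrum of its symmetric part, and the left-hand nonnegativity $0\le \min sp(\bar{\textbf{L}}_u)$, which asserts that $\bar{\textbf{L}}_u$ is positive semidefinite. I would treat the sandwiching inequalities first, since they follow from a clean Rayleigh-quotient computation, and then argue nonnegativity separately.

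For the sandwich, I would invoke a Bendixson-type argument. Let $\lambda\in sp(\bar{\textbf{L}})$ with (possibly complex) unit eigenvector $x$, so $\bar{\textbf{L}}x=\lambda x$ and $x^{*}x=1$. Then $x^{*}\bar{\textbf{L}}x=\lambda$, and since $\bar{\textbf{L}}$ is real, conjugating gives $x^{*}\bar{\textbf{L}}^{T}x=\bar{\lambda}$. Averaging the two yields $Re(\lambda)=x^{*}\bar{\textbf{L}}_u x$. Because $\bar{\textbf{L}}_u$ is real symmetric, $x^{*}\bar{\textbf{L}}_u x$ is a genuine Rayleigh quotient and is therefore pinned between its extreme eigenvalues, i.e. $\min sp(\bar{\textbf{L}}_u)\le Re(\lambda)\le \max sp(\bar{\textbf{L}}_u)$. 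As $\lambda$ was arbitrary, this establishes both middle inequalities simultaneously.

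For the nonnegativity, I would first record that symmetrization leaves the diagonal untouched, so $\bar{\textbf{L}}_u=\textbf{D}-\tfrac{\textbf{A}+\textbf{A}^{T}}{2}$, a real symmetric matrix whose diagonal entries are the nonnegative connectivities $d_{ii}$ and whose off-diagonal entries are $-\tfrac{a_{ij}+a_{ji}}{2}$. The natural route is to show $\bar{\textbf{L}}_u$ is weakly diagonally dominant: for each $i$ I would bound the off-diagonal absolute row sum $\sum_{j\ne i}\bigl|\tfrac{a_{ij}+a_{ji}}{2}\bigr|$ against $d_{ii}=\sum_{j}|a_{ij}|$, after which a symmetric, nonnegative-diagonal, diagonally dominant matrix is automatically positive semidefinite (equivalently, I could exhibit the quadratic form $x^{T}\bar{\textbf{L}}_u x$ as a nonnegative combination of squared edge differences). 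An alternative is to reduce to the standard fact that the signed Laplacian of an \emph{undirected} weighted graph is positive semidefinite, applied to the symmetrized graph with edge weights $\tfrac{a_{ij}+a_{ji}}{2}$.

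The hard part will be exactly this diagonal-dominance step, because directedness decouples the row connectivity $d_{ii}=\sum_j|a_{ij}|$ from the reverse-edge weights $a_{ji}$ that appear after symmetrization. The triangle inequality only gives $\sum_{j\ne i}\bigl|\tfrac{a_{ij}+a_{ji}}{2}\bigr|\le \tfrac12\sum_{j}|a_{ij}|+\tfrac12\sum_{j}|a_{ji}|$, i.e. a bound in terms of the average of the out- and in-connectivities rather than $d_{ii}$ alone, so the argument must control the in-connectivity against the row sum. I expect the clean statement to require the structural balance of the STN (e.g. matching in- and out-strength at each node, or the reduction to the symmetric signed Laplacian above) to absorb the reverse-edge contributions into $d_{ii}$; pinning down the precise hypothesis under which this holds is the crux of the proof.
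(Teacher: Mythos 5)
Your Bendixson argument for the middle inequalities is correct and complete: for a unit eigenvector $x$ of $\bar{\textbf{L}}$ with eigenvalue $\lambda$, averaging $x^{*}\bar{\textbf{L}}x=\lambda$ with its conjugate $x^{*}\bar{\textbf{L}}^{T}x=\bar{\lambda}$ gives $Re(\lambda)=x^{*}\bar{\textbf{L}}_u x$, which the Rayleigh quotient pins between the extreme eigenvalues of $\bar{\textbf{L}}_u$. Note that the paper states Lemma \ref{lemma:1} without any proof at all, so there is nothing to compare this against; your argument is the standard one and is fine.

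The difficulty you flag in the nonnegativity step is not a defect of your strategy --- the left inequality is simply false for general directed STNs, so no argument can close that gap. Take $\mathcal{V}=\{1,2\}$ with a single edge $(1,2)$ of weight $w_{12}=1$. Then
\begin{equation*}
\bar{\textbf{L}}=\begin{pmatrix} 1 & -1 \\ 0 & 0 \end{pmatrix}, \qquad
\bar{\textbf{L}}_u=\begin{pmatrix} 1 & -\tfrac{1}{2} \\ -\tfrac{1}{2} & 0 \end{pmatrix},
\end{equation*}
and $sp(\bar{\textbf{L}}_u)=\bigl\{\tfrac{1+\sqrt{2}}{2},\,\tfrac{1-\sqrt{2}}{2}\bigr\}$, whose minimum is negative. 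Exactly as you suspected, diagonal dominance of $\bar{\textbf{L}}_u$ requires the in-strength $\sum_{j}|a_{ji}|$ to be controlled by the out-strength $d_{ii}$ at every node, and this fails here (node $2$ has in-strength $1$ but out-strength $0$); since in- and out-strengths have equal global sums, the natural sufficient hypothesis is that they coincide at every node, e.g.\ when the graph is effectively undirected. What the paper actually needs from this lemma (in Theorem \ref{theorem:foundamentalMatrix}) is only $Re(sp(\bar{\textbf{L}}))\geq 0$, and that is true and provable without $\bar{\textbf{L}}_u$: each Gershgorin disc of $\bar{\textbf{L}}$ is centered at $d_{ii}$ with radius $\sum_{j\neq i}|a_{ij}|=d_{ii}$, hence lies in the closed right half-plane. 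So the correct repair is to drop the claim $0\leq \min sp(\bar{\textbf{L}}_u)$, keep your Bendixson sandwich (or discard $\bar{\textbf{L}}_u$ altogether), and derive the nonnegativity of the real parts directly from Gershgorin's theorem.
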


\begin{theorem}
The term $(\bar{\textbf{L}}+\textbf{I})$ is invertible, where $\textbf{I}$ is identity matrix.
\label{theorem:foundamentalMatrix}
\end{theorem}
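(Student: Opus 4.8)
The plan is to show that $\bar{\textbf{L}}+\textbf{I}$ has no zero eigenvalue, which is equivalent to invertibility. First I would recall the standard fact that a square matrix is nonsingular precisely when $0$ does not belong to its spectrum. Since the eigenvalues of $\bar{\textbf{L}}+\textbf{I}$ are exactly $\{\lambda+1 : \lambda \in sp(\bar{\textbf{L}})\}$ (adding a scalar multiple of $\textbf{I}$ shifts every eigenvalue by that scalar), the matrix $\bar{\textbf{L}}+\textbf{I}$ is singular if and only if $-1 \in sp(\bar{\textbf{L}})$. Thus it suffices to rule out $-1$ as an eigenvalue of $\bar{\textbf{L}}$.

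This is where Lemma \ref{lemma:1} does the work. The lemma guarantees that $Re(sp(\bar{\textbf{L}})) \geq \min sp(\bar{\textbf{L}}_u) \geq 0$, i.e.\ every eigenvalue $\lambda$ of $\bar{\textbf{L}}$ satisfies $Re(\lambda)\geq 0$. Now $-1$ is a real number with $Re(-1)=-1<0$, so it cannot lie in $sp(\bar{\textbf{L}})$. Combining this with the previous paragraph yields that $\bar{\textbf{L}}+\textbf{I}$ is invertible.

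Because the whole argument reduces to a spectral shift followed by a direct application of Lemma \ref{lemma:1}, I expect no serious obstacle in the proof itself; the substantive content has already been isolated into the eigenvalue localization of Lemma \ref{lemma:1}, which in turn rests on the positive semidefiniteness of the symmetric part $\bar{\textbf{L}}_u$ (reflecting the diagonal dominance $d_{ii}=\sum_j |a_{ij}|$ built into the generalized Laplacian). The one point requiring minor care is that Lemma \ref{lemma:1} controls only the real parts of the possibly complex eigenvalues of the non-symmetric matrix $\bar{\textbf{L}}$; this is nonetheless sufficient here, since the forbidden value $-1$ is real and its negativity is already detected by the real-part bound.
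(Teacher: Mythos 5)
Your proposal is correct and follows essentially the same route as the paper's own proof: both shift the spectrum of $\bar{\textbf{L}}$ by $1$ and invoke Lemma \ref{lemma:1} to conclude that every eigenvalue of $\bar{\textbf{L}}+\textbf{I}$ has real part at least $1$, hence nonzero. Your phrasing via ``$-1$ cannot be an eigenvalue of $\bar{\textbf{L}}$'' is just an equivalent restatement of the paper's determinant argument, so there is nothing substantive to distinguish the two.
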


\begin{proof}
Denote the eigenvalues of $\bar{\textbf{L}}$ as $\{k_1, k_2, ..., k_{|\mathcal{V}|}\}$ and the corresponding eigenvectors as $\{\textbf{x}_1, \textbf{x}_2, ..., \textbf{x}_{|\mathcal{V}|}\}$. Then, we have the following derivation:
\begin{equation*}
    (\bar{\textbf{L}}+\textbf{I}) \textbf{x}_i = \bar{\textbf{L}}\textbf{x}_i + \textbf{x}_i = (k_i+1)\textbf{x}_i
\end{equation*}

Hence, the eigenvalues of $(\bar{\textbf{L}}+\textbf{I})$ are $\{k_1+1, k_2+1, ..., k_{|\mathcal{V}|}+1\}$. According to lemma \ref{lemma:1}, for $1 \leq i \leq |\mathcal{V}|$, $Re(k_i) \ge 0$. Therefore, the real parts of all the eigenvalues of $(\bar{\textbf{L}}+\textbf{I})$ are greater than $1$, $det(\bar{\textbf{L}}+\textbf{I}) \neq 0$, and $(\bar{\textbf{L}}+\textbf{I})$ is invertible.
\end{proof}

\subsection{Friedkin-Johnsen Model}
The Friedkin–Johnsen model \cite{Friedkin1990Social} is a famed continuous-valued opinion dynamics model. In this model, each node $i\in \mathcal{V}$  holds a fixed internal opinion $s_i$, which remains constant. During each iteration, node $i$ updates its expressed opinion $z_i$ as follows:
\begin{equation}
	\label{e:F-J model}
	z_i=\tfrac{s_i+\begin{matrix} \sum_{j\in \mathcal{N}(i)} w_{ij}z_j\end{matrix}}{1+\begin{matrix}\sum_{j\in \mathcal{N}(i)} w_{ij}\end{matrix}}
\end{equation}
where $\mathcal{N}(i)$ denotes the successors of i. The values of $s_i$ and $z_i$ are in the interval $[-1,1]$, where $-1$/$1$ signifies the strongest positive/negative opinion. The repeated averaging does converge to the Nash equilibrium of a social game \cite{bindel2015bad}.

\subsection{Generalized Opinion Dynamics Model For Social Trust Networks}

As relationships with opposite polarities in STNs are now considered, we generalize the social game of the Friedkin–Johnsen model according to structural balance theory \cite{islam2018signet}. Specifically, an individual is supposed to express the opposite opinions of the people she distrusts. Hence the consensus cost function of the social game is:
\begin{equation*}
    c(z_i)=(z_i-s_i)^2+\begin{matrix} \sum_{j\in \mathcal{N}(i)} |w_{ij}|(z_i-sgn(i,j)z_j)^2 \end{matrix}
\end{equation*}
where $sgn(i,j)$ denotes the sign of nonzero $w_{ij}$. Accordingly, the repeated averaging process becomes:
\begin{equation}
\label{e:signed F-J model}
z_i=\tfrac{s_i+\begin{matrix} \sum_{j\in \mathcal{N}(i)} w_{ij}z_j\end{matrix}}{1+\begin{matrix}\sum_{j\in \mathcal{N}(i)} |w_{ij}|\end{matrix}}
\end{equation}

\begin{prop}
Let $\textbf{s}$ denote the internal opinion vector, whose $i$-th component is $s_i$, and $\textbf{z}^*$ denote the expressed opinion vector at the Nash equilibrium. Equation (\ref{e:signed F-J model}) is solved by the equilibrium $\textbf{z}^*=(\bar{\textbf{L}}+\textbf{I})^{-1}\textbf{s}$.

\label{prop:expressedOpinion}
\end{prop}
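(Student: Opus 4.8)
The plan is to characterize the Nash equilibrium $\textbf{z}^*$ as the fixed point of the repeated-averaging map in Equation~(\ref{e:signed F-J model}) and then to recast the scalar fixed-point condition in matrix form so that $\bar{\textbf{L}}$ emerges naturally. First I would observe that at equilibrium the left- and right-hand sides of (\ref{e:signed F-J model}) coincide for every node, so that clearing the denominator gives the equivalent scalar relation $\bigl(1 + \sum_{j\in\mathcal{N}(i)} |w_{ij}|\bigr) z_i^* = s_i + \sum_{j\in\mathcal{N}(i)} w_{ij} z_j^*$ for each $i$.

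Next I would identify the two sums with entries of the matrices introduced in the notation. The coefficient $\sum_{j} |w_{ij}|$ on the left is exactly the diagonal entry $d_{ii}$ of the row connectivity matrix $\textbf{D}$, since $d_{ii} = \sum_j |a_{ij}|$ and $a_{ij} = w_{ij}$. The signed sum $\sum_j w_{ij} z_j^*$ on the right is the $i$-th component of $\textbf{A}\textbf{z}^*$. Collecting all $|\mathcal{V}|$ scalar equations into a single vector equation then yields $(\textbf{I} + \textbf{D})\textbf{z}^* = \textbf{s} + \textbf{A}\textbf{z}^*$, which rearranges to $(\textbf{I} + \textbf{D} - \textbf{A})\textbf{z}^* = \textbf{s}$. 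Recognizing $\bar{\textbf{L}} = \textbf{D} - \textbf{A}$, this is precisely $(\bar{\textbf{L}} + \textbf{I})\textbf{z}^* = \textbf{s}$.

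Finally I would invoke Theorem~\ref{theorem:foundamentalMatrix}, which guarantees that $(\bar{\textbf{L}} + \textbf{I})$ is invertible, to left-multiply by its inverse and conclude $\textbf{z}^* = (\bar{\textbf{L}} + \textbf{I})^{-1}\textbf{s}$. The invertibility step is essential: without it the linear system might be singular and the closed form would not be well defined, so Theorem~\ref{theorem:foundamentalMatrix} is what makes the equilibrium both unique and explicitly computable. I do not expect a genuine obstacle in this argument; the only point requiring care is the bookkeeping that distinguishes the absolute-value weights in the denominator (which form the diagonal of $\textbf{D}$) from the signed weights in the numerator (which form $\textbf{A}$), since it is exactly this asymmetry that produces the generalized Laplacian $\bar{\textbf{L}} = \textbf{D} - \textbf{A}$ rather than an ordinary signed Laplacian.
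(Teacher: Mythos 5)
Your proposal is correct and follows essentially the same route as the paper's own proof: both clear the denominator of the equilibrium relation, identify the resulting sums with entries of $\textbf{D}$ and $\textbf{A}$ to obtain $(\bar{\textbf{L}}+\textbf{I})\textbf{z}^*=\textbf{s}$, and then invoke Theorem~\ref{theorem:foundamentalMatrix} to invert. The only cosmetic difference is that the paper reaches the scalar fixed-point relation by setting the derivative of the game's consensus cost to zero, $c'(z_i^*)=0$, whereas you take it directly as the fixed-point condition of equation~(\ref{e:signed F-J model}); the matrix algebra and the appeal to invertibility thereafter are identical.
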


\begin{proof}
At the Nash equilibrium, each node $i$ adopts an expressed opinion $z^*_i$ that minimizes its cost, which implies that $c'(z^*_i)=0$.
\begin{equation}
\label{e:new cost}
\begin{matrix}
	c'(z^*_i)=2(z^*_i-s_i)+2\begin{matrix} \sum_{j\in \mathcal{N}(i)} |w_{ij}|(z_i^*-sgn(i,j)z^*_j)\end{matrix}=0 \\
	\Rightarrow z_i^*=\tfrac{s_i+\begin{matrix} \sum_{j\in \mathcal{N}(i)} w_{ij}z_j^*\end{matrix}}{1+\begin{matrix}\sum_{j\in \mathcal{N}(i)} |w_{ij}|\end{matrix}}
\end{matrix}
\end{equation}

Therefore, the Nash equilibrium opinion of a node is the weighted average of its internal opinion, the Nash equilibrium opinions of the nodes it trusts, and the opposite Nash equilibrium opinions of the nodes it distrusts. We rewrite equation (\ref{e:new cost}) as:
\begin{equation*}
    (1+\begin{matrix}\sum_{j\in \mathcal{N}(i)} |w_{ij}|\end{matrix})z^*_i=s_i+\begin{matrix} \sum_{j\in \mathcal{N}(i)} w_{ij}z^*_j\end{matrix}
\end{equation*}

By introducing the opinion vectors $\textbf{s}$ and $\textbf{z}^*$, we can formalize equation (\ref{e:new cost})'s matrix form as:
\begin{equation*}
\begin{aligned}
(\textbf{D}+\textbf{I})\textbf{z}^*=\textbf{s}+\textbf{A}\textbf{z}^{*}
&\Rightarrow (\textbf{D}-\textbf{A}+\textbf{I})\textbf{z}^*=\textbf{s} \\
&\Rightarrow (\bar{\textbf{L}}+\textbf{I})\textbf{z}^*=\textbf{s}
\end{aligned}
\end{equation*}

According to Theorem \ref{theorem:foundamentalMatrix}, $(\bar{\textbf{L}}+\textbf{I})$ is invertible. Therefore, we obtain that $\textbf{z}^*=(\bar{\textbf{L}}+\textbf{I})^{-1}\textbf{s}$.

\end{proof}

Given the generalized opinion dynamics model for STNs, we adopt a similar definition of overall opinion introduced in \cite{Gionis2013Opinion}.

\begin{defn}
(Overall opinion). The overall opinion $p(\textbf{z}^*)$ of an STN is the sum of the expressed opinions at Nash equilibrium:
\begin{equation*}
	p(\textbf{z}^*)=\begin{matrix}\sum_i z^*_i\end{matrix}=\vec{\textbf{1}}^T(\bar{\textbf{L}}+\textbf{I})^{-1}\textbf{s}
\end{equation*}
where $\vec{\textbf{1}}$ is a column vector of ones.
\end{defn}

\section{Opinion Maximization in Social Trust Networks}

In this section, we formalize IOP and EOP for solving opinion maximization in STNs, and we develop two methods for IOP and EOP, respectively. Figure \ref{f:IOP_EOP_example} illustrates that we can increase the overall opinion by changing a node's internal opinion or by fixing a node's expressed opinion.

\begin{figure}[t]
	\centering
	\includegraphics[scale=0.25]{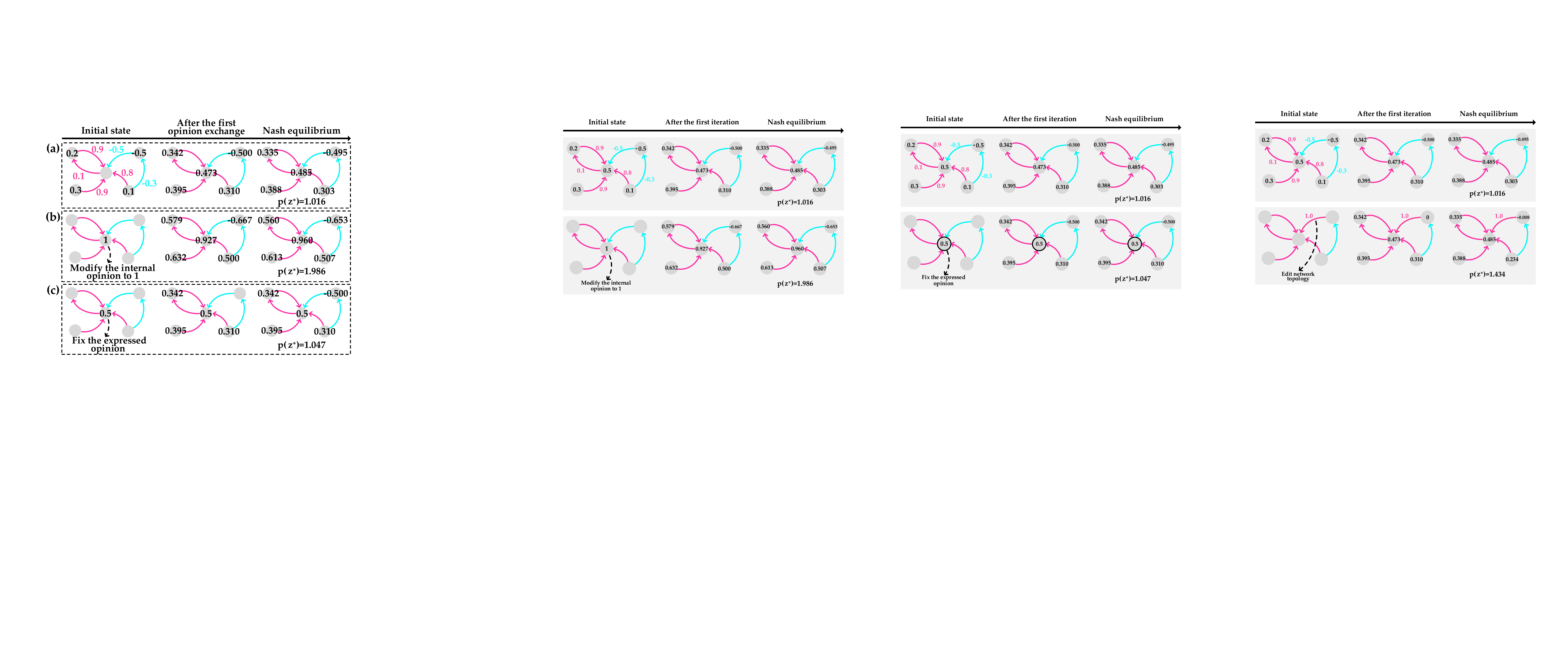}
	\caption{Subfigure (a) illustrates the opinion dynamics in an STN. The values on the nodes indicate the expressed opinions, and the values close to the edges indicate the weights. Note that the expressed opinion of a node equals its internal opinion at the initial state. We can increase the overall opinion by (b) changing the internal opinion of a node or (c) by fixing the expressed opinion of a node.}
	\label{f:IOP_EOP_example}
\end{figure}

\begin{figure}[t]
	\centering
	\includegraphics[scale=0.3]{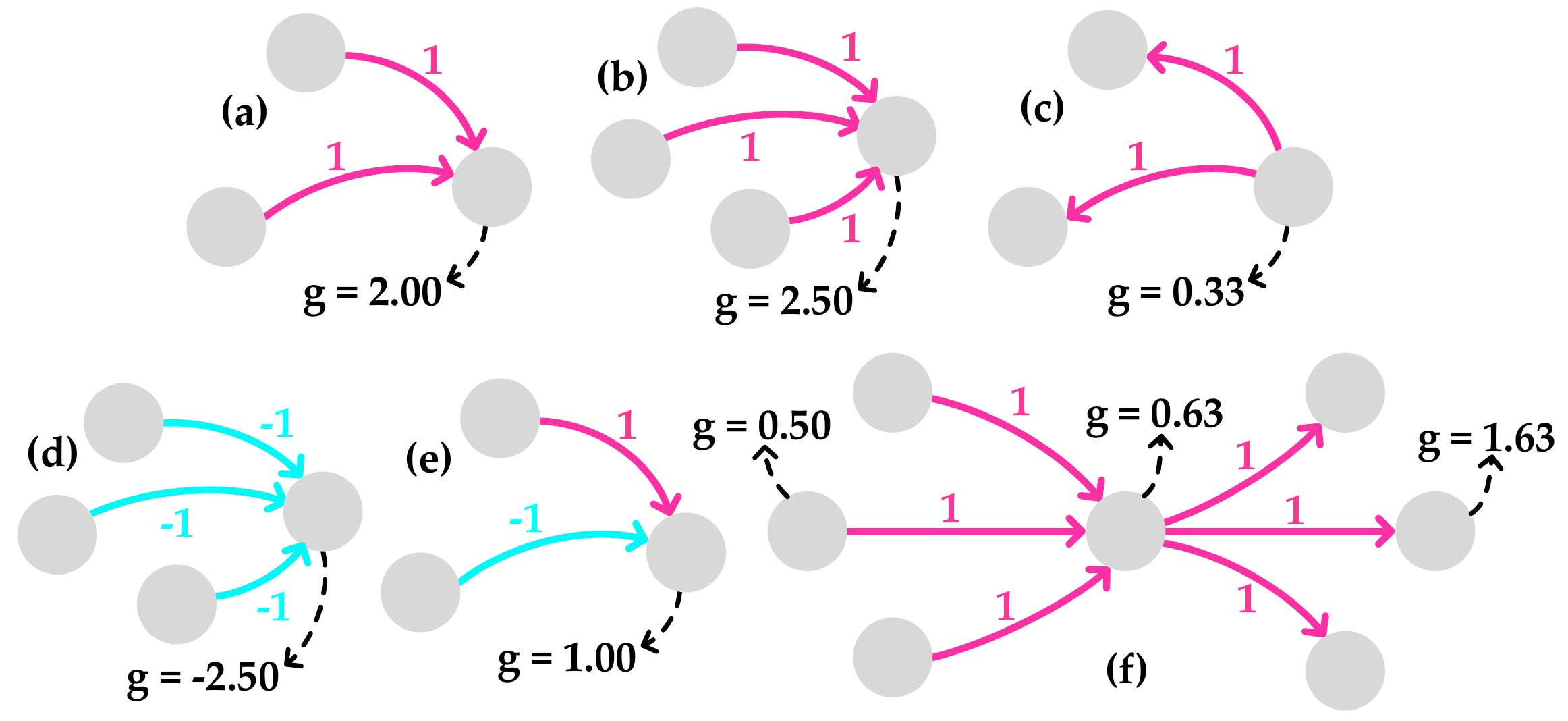}
	\caption{Example of the contribution index.}
	\label{f:Contribution_example}
\end{figure}

\subsection{Internal Opinion Problem}
In IOP, we seek to change the internal opinions of some nodes to specific values, such that the overall opinion is maximized. Our decision is limited by a budget $\mu$ on the total modification of internal opinions. Let $\Delta\textbf{s}$ denote the modification of the internal opinions and the formal problem definition is the following.

\begin{defn}
(Internal Opinion Problem). Given a social trust network $\mathcal{G}$, the internal opinion vector $\textbf{s}$, and the budget amount $\mu$, seek out the modification of internal opinions $\Delta\textbf{s}$, such that the overall opinion $p(\textbf{z}^*)$ is maximized:
\begin{align*}
&\max\limits_{\Delta \textbf{s}}\vec{\textbf{1}}^T(\bar{\textbf{L}}+\textbf{I})^{-1}(\textbf{s}+\Delta\textbf{s})\\
&s.t.\quad
\begin{cases}
-1\leq s_i+\Delta s_i\leq 1&(i=1,2,3,\cdots,|\mathcal{V}|) \\
\lVert \Delta\textbf{s} \rVert_1 \leq \mu
\end{cases}
\end{align*}
\label{defn:IOP}
\end{defn}

According to Proposition \ref{prop:expressedOpinion}, the expressed opinion vector is in the column space of $(\bar{\textbf{L}}+\textbf{I})^{-1}$ and the coordinates are stored in the vector of internal opinions. Then, we refer to $\textbf{g}=(\vec{\textbf{1}}^T(\bar{\textbf{L}}+\textbf{I})^{-1})^T$ as the contribution index vector. As $p(\textbf{z}^*)=\vec{\textbf{1}}^T(\bar{\textbf{L}}+\textbf{I})^{-1}\textbf{s}=\textbf{g}^T\textbf{s}$, each entry $g_i$ indicates the amount by which the node $i$'s internal opinion contributes to the overall opinion. Formally, the contribution index is defined as follows.

\begin{defn}
(Contribution index). The contribution index $g_i$ of node $i$ indicates how $i$'s internal opinion contributes to the overall opinion and is quantified by:
\begin{equation*}
	g_i=(\vec{\textbf{1}}^T(\bar{\textbf{L}}+\textbf{I})^{-1})^T_i
\end{equation*}
\end{defn}

In Figure \ref{f:Contribution_example}, we demonstrate some examples of the contribution index. For simplicity, the edge weight is set to $+1$ or $-1$. From (a)(b), we observe that a node, which receives more trusts from other nodes, has a larger contribution index. Subfigures (a)(c) illustrate that a KOL has more impact on the overall opinion than normal users. The target node in (d) has a negative contribution index, as it promotes the nodes who distrust it to adapt the opposite of its expressed opinion. In subfigure (e), the target node is trusted by a node and distrusted by another. The combined effects of the relationships with opposite signs result in a contribution index of 1. Subfigure (f) shows the contribution indices of nodes in a hierarchical network structure.

By changing the internal opinion of the node with the largest absolute value of the contribution index, we can get the most benefit at the same budget. Our method SIOP is summarized in Algorithm 1. Further, we proof that Algorithm 1 outputs the optimal solution of IOP.

\begin{theorem}
    Algorithm 1 outputs the optimal solution of IOP.
\end{theorem}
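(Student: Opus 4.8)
The plan is to recognize the Internal Opinion Problem as a linear program over the intersection of a box and an $\ell_1$-ball, and to show that Algorithm~1 realizes the optimum of the equivalent fractional-knapsack problem. First I would rewrite the objective using the contribution index vector $\textbf{g}=(\vec{\textbf{1}}^T(\bar{\textbf{L}}+\textbf{I})^{-1})^T$ introduced above, obtaining
\[
\vec{\textbf{1}}^T(\bar{\textbf{L}}+\textbf{I})^{-1}(\textbf{s}+\Delta\textbf{s})=\textbf{g}^T\textbf{s}+\textbf{g}^T\Delta\textbf{s}=\textbf{g}^T\textbf{s}+\sum_i g_i\,\Delta s_i .
\]
Since $\textbf{g}^T\textbf{s}$ is independent of the decision variable, maximizing the overall opinion is equivalent to maximizing the separable linear form $\sum_i g_i\,\Delta s_i$ subject to the same box constraints $-1\le s_i+\Delta s_i\le 1$ and the budget constraint $\sum_i |\Delta s_i|\le \mu$.

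Next I would reduce to nonnegative variables. I claim that in any optimal $\Delta\textbf{s}$ each coordinate may be taken to move in the direction $\mathrm{sgn}(g_i)$: if some $\Delta s_i$ had the opposite sign (or if $g_i=0$), resetting it to $0$ strictly increases the objective by $|g_i\,\Delta s_i|$ while freeing budget, so the original point was not optimal. Writing $u_i=|\Delta s_i|\ge 0$ and letting $b_i$ be the width of the box in the favorable direction ($b_i=1-s_i$ when $g_i>0$ and $b_i=1+s_i$ when $g_i<0$), the box constraint becomes $0\le u_i\le b_i$, the budget becomes $\sum_i u_i\le \mu$, and the objective becomes $\sum_i |g_i|\,u_i$. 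This is precisely a continuous (fractional) knapsack: each node offers value density $|g_i|$ per unit of consumed budget and has capacity $b_i$.

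I would then establish greedy optimality by an exchange argument. Order the nodes by decreasing $|g_i|$. If an allocation assigns $u_j>0$ to some node $j$ while a node $k$ with $|g_k|\ge |g_j|$ still has slack $u_k<b_k$, transferring $\delta=\min\{u_j,\,b_k-u_k\}$ from $j$ to $k$ keeps both $\sum_i u_i$ and all box constraints satisfied and changes the objective by $(|g_k|-|g_j|)\,\delta\ge 0$. Iterating such exchanges moves any feasible allocation, without loss of objective value, to the one that saturates the largest-$|g_i|$ nodes first and spills the remaining budget to successively smaller indices, which is exactly the rule ``allocate the budget to the nodes with the largest absolute contribution index'' implemented by SIOP. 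Mapping $u_i$ back to $\Delta s_i=\mathrm{sgn}(g_i)\,u_i$ then shows Algorithm~1 returns an optimal $\Delta\textbf{s}$.

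The main obstacle I anticipate is carrying out the greedy-optimality step rigorously while both constraint types are simultaneously active: because the $\ell_1$ budget couples all coordinates, one must confirm that the sign-collapse and the budget transfers never force a box violation elsewhere and that the density-ordered greedy is genuinely optimal rather than merely locally improving. Once the reduction to a pure fractional knapsack is justified, verifying that Algorithm~1 coincides with the density-ordered greedy (including partial filling of the final node and arbitrary tie-breaking among equal $|g_i|$) is routine.
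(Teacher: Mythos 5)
Your proposal is correct, and it is structured differently from the paper's own argument. The paper proves optimality \emph{locally at the greedy output}: it takes the $\Delta\textbf{s}$ produced by Algorithm~1, perturbs it by shifting a small amount of budget $\nu$ from an intervened node $i$ to a not-yet-intervened node $j$, and observes that the benefit changes by $(|g_j|-|g_i|)\nu \le 0$ because the greedy always picks the largest remaining $|g_i|$. That is the same exchange kernel as yours, but it only rules out one restricted class of perturbations; its sufficiency tacitly rests on the fact that IOP is a linear program over a convex feasible set (so local optimality implies global optimality), which the paper never states. Your route — rewriting the objective as $\textbf{g}^T\textbf{s}+\sum_i g_i\Delta s_i$, collapsing signs to reduce to a fractional knapsack with densities $|g_i|$ and capacities $b_i$, and then running a \emph{global} exchange argument that transforms an arbitrary feasible allocation into the greedy one without loss — is self-contained: it does not need any implicit appeal to LP structure, it explains exactly why greedy with partial filling of the last node is optimal, and it handles the direction-of-modification issue (why $\Delta s_i$ should have sign $\mathrm{sgn}(g_i)$, which Algorithm~1's $cost=1-sign\cdot s_i$ encodes) explicitly rather than by fiat. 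Two small points to tighten: when $g_i=0$ the sign-collapse step does not \emph{strictly} increase the objective (it leaves it unchanged while freeing budget, which is all you need, and matches the algorithm's early break on $g_i=0$); and you should note that the exchange process terminates, e.g.\ because each exchange either saturates a higher-density node or zeroes out a lower-density one, so only finitely many exchanges are needed. With those touches your argument is a complete and strictly more rigorous proof than the one in the paper.
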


\begin{proof}
    Given the output of Algorithm 1, the benefit is $\textbf{g}^T\Delta\textbf{s}$. For any intervened node $i$, we can adjust the corresponding modification $\Delta s_i$ to $(\Delta s_i - \nu)$ if $\Delta s_i > 0$, or $(\Delta s_i + \nu)$ otherwise, where $\nu>0$ is very slight perturbation. We then have unused budget $\nu$. By changing the internal opinion of node $j$, which is not intervened before, the benefit becomes $\textbf{g}^T\Delta\textbf{s} + (|g_j| - |g_i|)\nu$. According to the procedure of Algorithm 1, $|g_j| \leq |g_i|$, thus $\textbf{g}^T\Delta\textbf{s} \ge \textbf{g}^T\Delta\textbf{s} + (|g_j| - |g_i|)\nu$.
\end{proof}

\begin{rem}
IOP may have multiple optimal solutions.
\end{rem}

\begin{algorithm}[t]
	\caption{Solving internal opinion problem}
	\KwIn{A social trust network $\mathcal{G}=(\mathcal{V},\mathcal{E})$, internal opinions $\textbf{s}$, and budget $\mu$.}
	\KwOut{The modification $\Delta\textbf{s}$.}
	
	Initialize $\Delta\textbf{s}$ to null vector;
	
	\While{$\mu >0$}
	{
		Find the component $g_i$ with the largest absolute value;
		
		\If{$g_i=0$}{break;}
		
		$sign = g_i/|g_i|$;
		$cost = 1 - sign \cdot s_i$;
		
		\eIf{$cost>\mu$}{
			$\Delta s_{i}=sign \cdot \mu$;
			break;
		}{
			$\Delta s_{i}=sign \cdot cost$;
			$\mu =\mu-cost$;	
		}
		
	}	
\end{algorithm}

\subsection{Expressed Opinion Problem}
Suppose that we can fix the expressed opinions of some nodes to their internal opinions during the opinion formation process. The chosen nodes will then never be affected by the expressed opinions of their neighbors. In EOP, we try to find a set of nodes $\mathcal{U}\in \mathcal{V}$. For each node $i$ in $\mathcal{U}$, we fix its expressed opinion, such that the overall opinion is maximized. Let $p(\textbf{z}^*|\mathcal{U})$ denote the overall opinion after fixing the expressed opinion of nodes in $\mathcal{U}$ and the formal problem definition is the following.

\begin{defn}
(Expressed opinion problem). Given a social trust network $\mathcal{G}$, the internal opinion vector $\textbf{s}$, and an integer $\mu$, seek a set $\mathcal{U}$ of $\mu$ nodes and fix $z_i$ to $s_i$ for $i\in \mathcal{U}$, such that $p(\textbf{z}^*|\mathcal{U})$ is then maximized.
\end{defn}

\begin{theorem}
EOP is NP-hard.
\label{theorem:hardness}
\end{theorem}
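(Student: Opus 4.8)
The plan is to establish NP-hardness by a polynomial-time reduction from \textbf{Vertex Cover}: given an undirected graph $H=(V_H,E_H)$ and an integer $k$, decide whether $H$ admits a vertex cover of size at most $k$. First I would pin down the decision version of EOP, asking whether some set $\mathcal{U}$ with $|\mathcal{U}|=\mu$ achieves $p(\textbf{z}^*\,|\,\mathcal{U})\ge\tau$ for a target value $\tau$. Since $p(\textbf{z}^*\,|\,\mathcal{U})=\vec{\textbf{1}}^T\textbf{z}^*$ is computable in polynomial time for any fixed $\mathcal{U}$ (it only requires solving the reduced linear system, which stays well posed by Theorem~\ref{theorem:foundamentalMatrix}), it suffices to prove hardness of this decision version.

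Next I would build the STN gadget. For each vertex $v\in V_H$ I would create a \emph{vertex node} $x_v$ with a positive internal opinion, and for each edge $e=\{u,v\}\in E_H$ an \emph{edge node} $y_e$ carrying the strongest negative internal opinion $s_{y_e}=-1$ (the model confines opinions to $[-1,1]$, so any separation must come from structure rather than from large magnitudes). The crucial idea is to route the negative influence of $y_e$ toward a common positive anchor through a \emph{series} path that passes through \emph{both} endpoint nodes, namely $y_e\!\to\! x_u\!\to\! x_v\!\to\! t$, realized by trust edges so that each node on the path is influenced by its predecessor. Because fixing a node severs exactly its incoming influence, fixing \emph{either} $x_u$ \emph{or} $x_v$ pins that node to its positive internal value and blocks propagation along the path, whereas leaving both free lets the negativity reach the anchor. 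Thus fixing $x_v$ acts precisely as selecting $v$ into a cover and neutralizes every edge incident to $v$; I would set $\mu=k$ and choose $\tau$ to be the overall opinion attained when every edge path is blocked.

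The correctness argument then splits into two directions. If $H$ has a cover $S$ of size $k$, fixing $\{x_v:v\in S\}$ blocks all $|E_H|$ negative paths and yields $p\ge\tau$. Conversely, I would argue that any uncovered edge leaves at least one active negative path, imposing an additive deficit bounded below by a fixed positive constant of the construction; by amplifying this deficit structurally so that even a single uncovered edge lowers the sum by more than the remaining budget could ever recover, $p\ge\tau$ forces a full cover, and $|\mathcal{U}|=k$ forces that cover to have size $k$. The whole construction is clearly polynomial in $|V_H|+|E_H|$.

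The main obstacle will be the \emph{linearity} of the Friedkin--Johnsen dynamics: unlike threshold cascade models, averaging does not produce a clean Boolean ``blocked/unblocked'' response, and a vertex node shared by many edge paths receives a superposition of influences that can blur the per-edge accounting. I expect the bulk of the work to be the quantitative separation: using the closed form $\textbf{z}^*=(\bar{\textbf{L}}+\textbf{I})^{-1}\textbf{s}$ together with the inverse-positivity of the all-trust (hence $M$-matrix) system $\bar{\textbf{L}}+\textbf{I}$, a monotone comparison argument should show that removing the coupling to a negative source can only raise the solution entrywise, giving the correct sign for the covered-versus-uncovered gap. The delicate part is making that gap provably positive within the bounded range $[-1,1]$ rather than by inflating magnitudes, which I would handle by amplifying the penalty structurally (e.g. attaching many pendant followers to the anchor so that one uncovered edge depresses a large block of summed opinions) and by isolating the edge paths (duplicating vertex nodes per incident edge and coupling the copies) so that cross-talk cannot wash out the separation.
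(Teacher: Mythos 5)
Your high-level skeleton (a reduction from vertex cover to a decision version of EOP) matches the paper's approach, but note that the paper itself only \emph{sketches} this proof: it generalizes the absorbing-random-walk interpretation of the equilibrium to STNs and reduces from vertex cover on \emph{regular} graphs, following \cite{Gionis2013Opinion}. You replace that machinery with an explicit gadget (edge nodes with internal opinion $-1$, series paths through both endpoint vertex-nodes, a positive anchor with pendant followers), and your monotonicity observation is sound: with all-positive trust weights the matrix $\bar{\textbf{L}}+\textbf{I}$ is a nonsingular M-matrix, its inverse is entrywise nonnegative, and blocking a negative source can only raise opinions, which gives the correct sign of the covered-versus-uncovered gap.

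However, the construction as proposed has two genuine gaps, both in the direction you yourself flag as ``the bulk of the work.'' First, NP-hardness requires excluding \emph{every} budget-$k$ set $\mathcal{U}$, not only sets of vertex nodes, and your amplification device manufactures exactly the spoiler that breaks this: fixing the anchor $t$ itself is an admissible intervention that pins $t$ to its positive internal opinion and thereby shields the entire pendant-follower block from \emph{all} negative paths at a cost of one unit of budget. Because cross-talk at shared vertex nodes keeps $z_t$ strictly below its internal value even under a full cover, the gain of ``fix $t$'' over ``fix a cover'' on the amplified block scales with the number of followers, while the loss on the (unamplified) vertex-node block is bounded by $O(|V_H|+|E_H|)$; so for the large follower counts your amplification needs, the set $\{t\}$ plus $k-1$ arbitrary nodes beats every cover, and the correspondence with vertex cover collapses. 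Second, the duplication-plus-coupling patch for cross-talk destroys the budget accounting: a vertex of degree $d$ now has $d$ copies, so selecting $v$ into the cover no longer corresponds to fixing a single node, and coupling cannot pin the copies, since in the averaging dynamics every free node retains strictly positive weight on its own internal opinion --- copies of a fixed node are only approximately aligned, which reinstates precisely the quantitative separation problem the duplication was meant to remove. These two defects interact (fixing one undermines the fix for the other), so as written the proposal is a plausible research plan rather than a proof; the paper's random-walk route on regular graphs avoids both issues because regularity yields uniform per-edge accounting with no auxiliary anchor or amplification gadget at all.
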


\begin{proof}
Due to lack of space, we only sketch the proof here. Our proof follows the idea in \cite{Gionis2013Opinion}. We generalized the absorbing random walk for STNs, then constructed a reduction from the problem of vertex cover on regular graphs.
\end{proof}

Given the hardness of EOP, we designed a greedy algorithm to achieve an acceptable solution. The algorithm starts with an empty set $\mathcal{U}^0$, and extends $\mathcal{U}^{(t-1)}$ by adding a node $i\in \mathcal{V}\setminus \mathcal{U}^{(t-1)}$ in the $t$-th iteration, such that the benefit to the overall opinion $p(\textbf{z}^*|\mathcal{U}^{(t-1)}\cup {i})-p(\textbf{z}^*|\mathcal{U}^{(t-1)})$ is maximized.
In the first iteration, we successively calculate the benefit of fixing the expressed opinion of each node $i$ in $\mathcal{V}$. Note that from the view of linear algebra, fixing $i$'s expressed opinion is equivalent to replacing the $i$-th row of $\textbf{A}$ with a row vector of zeros. And the row connectivity changes accordingly.

Let $\textbf{Q}=(\bar{\textbf{L}}+\textbf{I})^{-1}$ denote the fundamental matrix, $\textbf{Q}^\mathcal{U}$ denote the fundamental matrix after fixing the expressed opinions of nodes in $\mathcal{U}$. Obviously, $\textbf{Q}^{\mathcal{U}^0}=\textbf{Q}$. According to Theorem \ref{theorem:EOP_invertible}, we compute $\textbf{Q}^{\mathcal{U}^0\cup {i}}$ (equals to $\textbf{Q}^{\{i\}}$) using the Sherman–Morrison formula:
\begin{equation*}
\textbf{Q}^{\{i\}}=((\bar{\textbf{L}}+\textbf{I})+(-\textbf{e}_i \textbf{l}_{i:}))^{-1}\\
=\textbf{Q}+\frac{\textbf{Q}\textbf{e}_i\textbf{l}_{i:}\textbf{Q}}{1-\textbf{l}_{i:} \textbf{Q}\textbf{e}_i}
\end{equation*}
where $\textbf{e}_i$ is the unit vector, whose $i$-th component is 1, and $\textbf{l}_{i:}$ is the $i$-th row of $\bar{\textbf{L}}$. In the $t$-th iteration, the fundamental matrix after fixing the expressed opinion of $i\in \mathcal{V}\setminus \mathcal{U}^{(t-1)}$is the following:
\begin{equation}
	\label{e:updateFoundmental}
	\textbf{Q}^{\mathcal{U}^{t-1}\cup i}=\textbf{Q}^{\mathcal{U}^{t-1}}+\frac{\textbf{Q}^{\mathcal{U}^{t-1}}\textbf{e}_i\textbf{l}_{i:}\textbf{Q}^{\mathcal{U}^{t-1}}}{1-\textbf{l}_{i:}\textbf{Q}^{\mathcal{U}^{t-1}}\textbf{e}_i}
\end{equation}

\begin{theorem}
The term $\textbf{X}=(\bar{\textbf{L}}+\textbf{I})+(-\textbf{e}_i \textbf{l}_{i:})$ is invertible.
\label{theorem:EOP_invertible}
\end{theorem}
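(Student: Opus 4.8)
The plan is to recognize $\textbf{X}$ as the shifted generalized Laplacian of a \emph{modified} social trust network, so that Theorem~\ref{theorem:foundamentalMatrix} applies to it verbatim, with no new spectral estimates needed. First I would record the structural effect of the rank-one update $-\textbf{e}_i\textbf{l}_{i:}$. Since $\textbf{e}_i\textbf{l}_{i:}$ is the outer product whose $i$-th row equals $\textbf{l}_{i:}$ (the $i$-th row of $\bar{\textbf{L}}$) and whose other rows vanish, subtracting it from $\bar{\textbf{L}}$ simply zeroes out the $i$-th row of $\bar{\textbf{L}}$ while leaving every other row intact. Writing $\bar{\textbf{L}}'=\bar{\textbf{L}}-\textbf{e}_i\textbf{l}_{i:}$, we therefore have $\textbf{X}=\bar{\textbf{L}}'+\textbf{I}$, because the identity term is untouched by the update and so the $i$-th row of $\textbf{X}$ collapses to $\textbf{e}_i^T$ (consistent with fixing $z_i^*=s_i$).

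Next I would identify $\bar{\textbf{L}}'$ as a genuine generalized Laplacian. Let $\mathcal{G}'$ be the network obtained from $\mathcal{G}$ by deleting all edges emanating from node $i$, i.e.\ by replacing the $i$-th row of $\textbf{A}$ with zeros; this is exactly the operation induced by fixing $i$'s expressed opinion. For $\mathcal{G}'$ the $i$-th row connectivity becomes $d'_{ii}=\sum_j|a'_{ij}|=0$, so the entire $i$-th row of $\textbf{D}'-\textbf{A}'$ vanishes, while every other row is unchanged; hence the generalized Laplacian of $\mathcal{G}'$ coincides precisely with the $\bar{\textbf{L}}'$ defined above. The point I would check carefully here is that both the off-diagonal entries $-a_{ij}$ and the diagonal degree $d_{ii}$ in row $i$ disappear simultaneously, which holds exactly because a node with no outgoing edges has zero row connectivity.

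Finally, since $\mathcal{G}'$ is still a directed signed weighted graph, $\bar{\textbf{L}}'$ satisfies the hypotheses of Lemma~\ref{lemma:1}, so the real parts of its eigenvalues are nonnegative; applying Theorem~\ref{theorem:foundamentalMatrix} to $\mathcal{G}'$ then yields that $\bar{\textbf{L}}'+\textbf{I}=\textbf{X}$ is invertible. I expect the only real obstacle to be the bookkeeping of the previous paragraph, namely confirming that the algebraic update $-\textbf{e}_i\textbf{l}_{i:}$ reproduces exactly the graph-theoretic deletion of $i$'s out-edges; once that identification is established, invertibility is inherited directly with no further work. As a closing remark, this invertibility is equivalent to the nonvanishing of the Sherman--Morrison denominator $1-\textbf{l}_{i:}\textbf{Q}\textbf{e}_i$, which is precisely the quantity that must be nonzero for equation~(\ref{e:updateFoundmental}) to be well defined.
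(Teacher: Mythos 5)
Your proof is correct, but it takes a genuinely different route from the paper. The paper's proof is determinant-based: since the $i$-th row of $\textbf{X}$ is $\textbf{e}_i^T$, it expands $det(\textbf{X})$ along that row to get $det(\textbf{X})=M_{ii}$, then identifies the minor $M_{ii}$ as $det(\bar{\textbf{L}}''+\textbf{I}+\textbf{C})$, where $\bar{\textbf{L}}''$ is the $(|\mathcal{V}|-1)$-dimensional Laplacian of the subgraph with node $i$ \emph{deleted} and $\textbf{C}$ is a nonnegative diagonal correction recording the weights $|a_{ji}|$ of the edges into $i$; it then reruns the spectral argument of Theorem~\ref{theorem:foundamentalMatrix} on this perturbed matrix. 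You instead avoid determinants entirely: you recognize $\textbf{X}$ itself as $\bar{\textbf{L}}'+\textbf{I}$, where $\bar{\textbf{L}}'$ is the full-size generalized Laplacian of the STN obtained by removing $i$'s out-edges (your bookkeeping that zeroing row $i$ of $\textbf{A}$ simultaneously kills $d_{ii}$, so the whole $i$-th row of $\textbf{D}'-\textbf{A}'$ vanishes, is exactly right, and matches the paper's own remark that fixing $z_i$ amounts to zeroing the $i$-th row of $\textbf{A}$). Your route buys two things: Lemma~\ref{lemma:1} and Theorem~\ref{theorem:foundamentalMatrix} apply \emph{verbatim} to $\mathcal{G}'$, with no need to argue separately about the Laplacian-plus-diagonal matrix $\bar{\textbf{L}}''+\textbf{C}$ as the paper must ("like the proof of Theorem~\ref{theorem:foundamentalMatrix}"); and it iterates for free, since after fixing any set $\mathcal{U}$ of nodes the system matrix is still an STN Laplacian plus $\textbf{I}$, which is what equation~(\ref{e:updateFoundmental}) and Algorithm~2 actually require at every step. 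What the paper's minor-expansion buys in exchange is an explicit reduced-dimension formula for $det(\textbf{X})$ on the node-deleted subgraph. Both arguments rest equally on Lemma~\ref{lemma:1} (nonnegativity of the real parts of the Laplacian spectrum), so neither is more self-contained than the other; your closing observation that invertibility is equivalent to the nonvanishing of the Sherman--Morrison denominator $1-\textbf{l}_{i:}\textbf{Q}\textbf{e}_i$ (i.e., of $q_{ii}$) is also correct, via the matrix determinant lemma together with Theorem~\ref{theorem:foundamentalMatrix}.
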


\begin{proof}
Expand $\textbf{X}$ along the $i$-th row and we find:
\begin{equation*}
	det(\textbf{X})=\begin{matrix} \sum_{j=1}^{|\mathcal{V}|} (-1)^{i+j}x_{ij}M_{ij} \end{matrix}=(-1)^{i+i}M_{ii}=M_{ii}
\end{equation*}
where $M_{ij}$ is the minor of $\textbf{X}$. Denote the Laplacian matrix of sub-graph without node $i$ as $\bar{\textbf{L}}'$, then we have:
\begin{equation*}
	M_{ii} = det(\bar{\textbf{L}}' + \textbf{I} + \textbf{C})
\end{equation*}
where $\textbf{C} \in \mathbb{R}^{(|\mathcal{V}|-1) \times (|\mathcal{V}|-1)}$ is a diagonal matrix, and $c_{jj} = |a_{ji}|$ for $j<i$ and $c_{(j-1)(j-1)} = |a_{(j-1)i}|$ for $j>i$.
Like the proof of Theorem \ref{theorem:foundamentalMatrix}, we find $Re(sp(\bar{\textbf{L}}' + \textbf{I} + \textbf{C})) \geq 1$. Therefore, $det(\bar{\textbf{L}}' + \textbf{I} + \textbf{C}) \neq 0$. As $det(\textbf{X})=M_{ii}=det(\bar{\textbf{L}}' + \textbf{I} + \textbf{C})$, the term $\textbf{X}$ is invertible.
\end{proof}

Thus, the benefit of fixing the expressed opinion of $i$ in the $t$-th iteration is:
\begin{equation}
\label{e:eop}
\begin{aligned}
\vec{\textbf{1}}^T(\textbf{Q}^{\mathcal{U}^{t-1}\cup i}-\textbf{Q}^{\mathcal{U}^{t-1}})\textbf{s}&=\vec{\textbf{1}}^T\frac{\textbf{Q}^{\mathcal{U}^{t-1}} \textbf{e}_i \textbf{l}_{i:} \textbf{Q}^{\mathcal{U}^{t-1}}}{1-\textbf{l}_{i:} \textbf{Q}^{\mathcal{U}^{t-1}} \textbf{e}_i}\textbf{s} \\
\end{aligned}
\end{equation}

We first calculate the term $\textbf{e}_i \textbf{l}_{i:} \textbf{Q}^{\mathcal{U}^{t-1}}$ through:
\begin{equation*}
\begin{aligned}
&\textbf{e}_i \textbf{l}_{i:} \textbf{Q}^{\mathcal{U}^{t-1}} = \textbf{e}_i ((\textbf{l}_{i:} + \textbf{e}_i^T) - \textbf{e}_i^T) \textbf{Q}^{\mathcal{U}^{t-1}} \\
=&\textbf{e}_i(\textbf{Q}^{\mathcal{U}^{t-1}})^{-1}_{i:}\textbf{Q}^{\mathcal{U}^{t-1}}-\textbf{e}_i \textbf{e}_i^T \textbf{Q}^{\mathcal{U}^{t-1}}=\textbf{e}_i (\textbf{e}_i^T - \textbf{Q}^{\mathcal{U}^{t-1}}_{i:})
\end{aligned}
\end{equation*}

This result illustrates that $\textbf{e}_i \textbf{l}_{i:} \textbf{Q}^{\mathcal{U}^{t-1}}$ is the matrix whose only non-negative row is the $i$-th row. And its $i$-th row is the negative $\textbf{Q}^{\mathcal{U}^{t-1}}_{i:}$ with the addition of one in the $i$-th entry.

Then, we calculate the term $\textbf{l}_{i:} \textbf{Q}^{\mathcal{U}^{t-1}} \textbf{e}_i$ through:
\begin{equation*}
\begin{aligned}
& \textbf{l}_{i:} \textbf{Q}^{\mathcal{U}^{t-1}} \textbf{e}_i = ((\textbf{l}_{i:} + \textbf{e}_i^T) - \textbf{e}_i^T) \textbf{Q}^{\mathcal{U}^{t-1}} \textbf{e}_i \\
=& (\textbf{e}_i^T - \textbf{Q}^{\mathcal{U}^{t-1}}_{i:}) \textbf{e}_i
= 1 - q^{\mathcal{U}^{t-1}}_{ii}
\end{aligned}
\end{equation*}

Thus, we rewrite equation (\ref{e:eop}) as:
\begin{equation}
\label{e:eop2}
\begin{aligned}
&\vec{\textbf{1}}^T(\textbf{Q}^{\mathcal{U}^{t-1}\cup i}-\textbf{Q}^{\mathcal{U}^{t-1}})\textbf{s}=\vec{\textbf{1}}^T\frac{\textbf{Q}^{\mathcal{U}^{t-1}} \textbf{e}_i (\textbf{e}_i^T - \textbf{Q}^{\mathcal{U}^{t-1}}_{i:})}{1-(1 - q^{\mathcal{U}^{t-1}}_{ii})}\textbf{s} \\
=& \frac{(\textbf{g}^{\mathcal{U}^{t-1}})^T \textbf{e}_i (s_i - z^{\mathcal{U}^{t-1}*}_i)}{q^{\mathcal{U}^{t-1}}_{ii}}
= \frac{g^{\mathcal{U}^{t-1}}_i}{q^{\mathcal{U}^{t-1}}_{ii}} (s_i - z^{\mathcal{U}^{t-1}*}_i) \\
\end{aligned}
\end{equation}
where we denote $\textbf{g}^{\mathcal{U}^{t-1}} = (\vec{\textbf{1}}^T \textbf{Q}^{\mathcal{U}^{t-1}})^T$ as the contribution index vector after fixing the expressed opinions of nodes in $\mathcal{U}^{t-1}$. For convenience, we denote that $\textbf{z}^{\mathcal{U}^{t-1}*}=(\textbf{z}^*|\mathcal{U}^{t-1})$.

Based on equation (\ref{e:eop2}), we integrate the calculation for each node into a simple expression:
\begin{equation}
    \textbf{b} = diag(\textbf{g}^{\mathcal{U}^{t-1}}) (diag(\textbf{Q}^{\mathcal{U}^{t-1}}))^{-1} (\textbf{s} - \textbf{z}^{\mathcal{U}^{t-1}*})
    \label{e:integration}
\end{equation}
where the $i$-th component of \textbf{b} indicates the benefit of fixing the expressed opinion of the $i$-th node and the operation $diag(*)$ expands a vector to a diagonal matrix or reverses the diagonal entries of a matrix. By using equation (\ref{e:integration}), we reduce the computational cost from $\mathcal{O}(|\mathcal{V}|^3)$ to $\mathcal{O}(|\mathcal{V}|^2)$, and this makes our SEOP method (see Algorithm 2) efficient. Equation (\ref{e:integration}) also constructs the relations between EOP, the fundamental matrix, and internal conflict \cite{chen2018quantifying}, which may help to enable the analysis of simultaneous opinion maximization and conflict reduction in future works.

\begin{algorithm}[t]
	\caption{Solving expressed opinion problem}
	\KwIn{A social trust network $\mathcal{G}=(\mathcal{V},\mathcal{E})$, internal opinions $\textbf{s}$, and budget $\mu$.}
	\KwOut{The node set $\mathcal{U}$, in which the node's expressed opinion will be fixed.}
	
	Initialize $\mathcal{U}$ to empty set;
	$t=0$;
	
	\While{$t < \mu$}
	{
		$t=t+1$;
		
		Calculate the benefits $\textbf{b}$ of fixing nodes' expressed opinions via equation (\ref{e:integration});
		
		\For {$i$ in $\mathcal{V}-\mathcal{U}^{t-1}$}{
			\uIf{$i==1$}{
				$candidate=i$;
			}
			\uElseIf{$b_i > b_{candidate}$}{
				$candidate=i$;
			}
		}
		$\mathcal{U}^t=\mathcal{U}^{t-1}\cup candidate$;
		
		Update $\textbf{Q}^{\mathcal{U}^t}$ via equation (\ref{e:updateFoundmental});
	}	
\end{algorithm}

\section{Experiments}
In this section, we describe a series of experiments that were conducted to evaluate the proposed methods. We first introduce the datasets, the experimental setup, and then the experimental results for IOP and EOP, respectively.

\subsection{Datasets}
The real-world social trust networks used in the experiments are the following: (\romannumeral1) Alpha and (\romannumeral2) OTC \cite{kumar2016edge,kumar2018rev2}. We normalized the trust values (i.e., edge weights) to the interval $[-1, 1]$. Moreover, we also tested our methods on Elec \cite{leskovec2010signed,leskovec2010predicting} and Rfa \cite{west2014exploiting}, as the relationships in these two networks are closely related to trust. The statistical details of these networks can be found in SNAP.\footnote{http://snap.stanford.edu/data/}

\subsection{Experimental Setup}
To simulate different situations, we randomly sampled values, which obey specific distributions, to initialize the internal opinion vector $\textbf{s}$. More precisely, for each network, we used five sets of internal opinions. (\romannumeral1) The internal opinions follow a uniform distribution (i.e., $\textbf{s}\sim U(-1, 1)$). (\romannumeral2) The internal opinions follow a standard normal distribution (i.e., $\textbf{s}\sim N(0, 1)$). (\romannumeral3) The absolute values of the internal opinions follow power-law distributions with $\alpha=1$ and $\alpha=2$ (i.e., $|\textbf{s}|\sim Pow(1)$ and $|\textbf{s}|\sim Pow(2)$), and each entry of $\textbf{s}$ is negated with a probability of 0.5. (\romannumeral4) The internal opinion of a node positively correlates to that node’s column connectivity (i.e., $s_i \propto \begin{matrix} \sum_{j} |a_{ji}| \end{matrix}$), and each entry of $\textbf{s}$ is negated with a probability of 0.5.

\subsection{Comparative Methods}
To the best of our knowledge, existing methods cannot solve IOP and EOP in STNs. For comparison, we modified the heuristics used in the existing studies.

For IOP, we consider four heuristics. (\romannumeral1) \textbf{Rand} \cite{li2013influence}. We randomly sort the nodes. (\romannumeral2) \textbf{Trust}. This is slightly modified from the heuristics in \cite{chen2015online}. We define the trust sum of a node as the sum of the corresponding column of the adjacency matrix. The node with large trust sum may have a strong ability to influence other nodes with their opinions. Therefore, we sort the nodes in descending order of their trust sum. (\romannumeral3) \textbf{IO}. This was inspired by \cite{musco2018minimizing}. The overall opinion may increase if we can convince the people with negative internal opinions to have positive internal opinions. Therefore, we sort the nodes in ascending order of their internal opinions. (\romannumeral4) \textbf{EO}. This was inspired by \cite{chen2018quantifying}. We sort the nodes in ascending order of their expressed opinions. After sorting the nodes, we change their internal opinions to $1$ in order.

For EOP, we consider three heuristics. (\romannumeral1) \textbf{Rand}. (\romannumeral2) \textbf{IO}. We do not consider EO in EOP, as the expressed opinion of the intervened node definitely equals its internal opinion. (\romannumeral3) \textbf{IOTS}. This is a variation of "RWR" in \cite{Gionis2013Opinion}. We consider comprehensively the internal opinion and the trust sum. More precisely, we sort the nodes in descending order of the product of the internal opinion and trust sum. After sorting, we fix the nodes' expressed opinions in order.

\begin{table}[t]
    \renewcommand\tabcolsep{3.5pt}
	\centering
	\begin{tabular}{cccccc}
		\toprule
		Dataset & \makecell[c]{Avg\\ benefit\\of SIOP} & \makecell[c]{SIOP\\vs\\Rand} & \makecell[c]{SIOP\\vs\\Trust}  & \makecell[c]{SIOP\\vs\\IO} & \makecell[c]{SIOP\\vs\\EO} \\
		\midrule
		Alpha & 273.2 & $1.66 \times$ & $2.53 \times$ & $3.41 \times$ & $2.55 \times$ \\
		OTC & 307.7 & $1.91 \times$ & $3.05 \times$ & $3.23 \times$ & $2.32 \times$ \\
		Elec & 589.9 & $3.63 \times$ & $>10 \times$ & $2.97 \times$ & $2.21 \times$ \\
		Rfa & 981.0 & $>10 \times$ & $>10 \times$ & $8.61 \times$ & $3.96 \times$ \\
		\bottomrule
	\end{tabular}
	\caption{Experimental results on IOP.}
	\label{t:IOP}
\end{table}

\begin{figure}[t]
	\centering
	\includegraphics[scale=0.37]{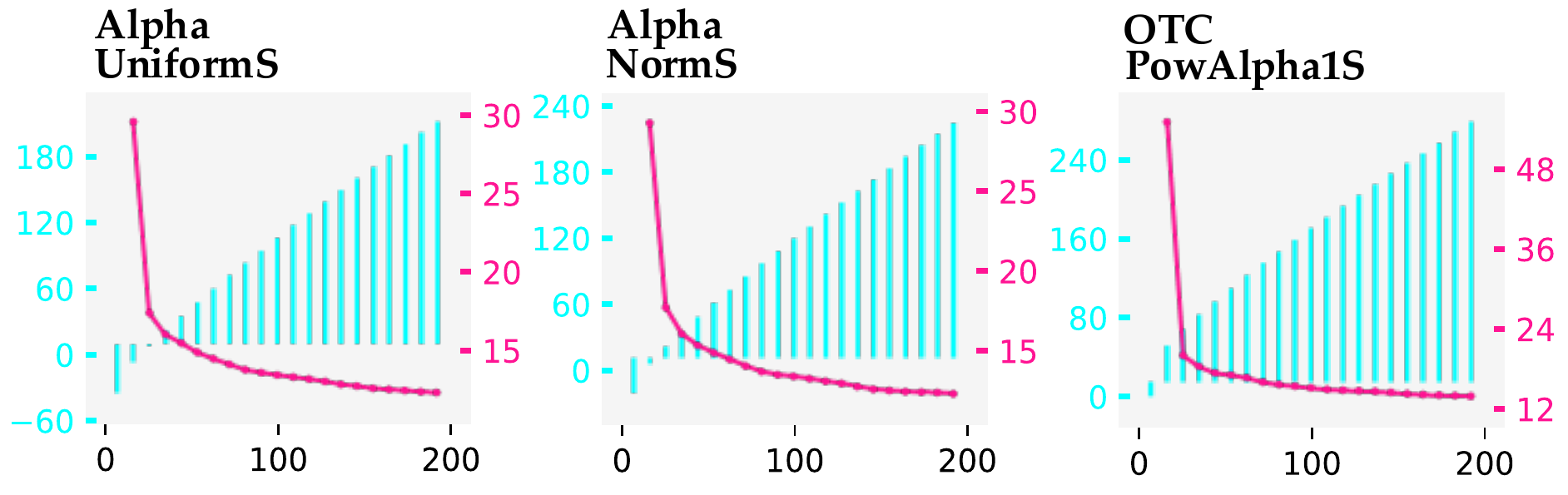}
	\caption{A part of the performance curves of SIOP. The x-axis shows the budget amount, the left y-axis shows the overall opinion, and the right y-axis shows the unit benefit.}
	\label{f:IOP-STN}
\end{figure}

\subsection{Solving Internal Opinion Problem}

We evaluated our method for solving IOP and the results are reported in Table \ref{t:IOP}. Our method SIOP has an overwhelming advantage compared to the heuristic methods, as SIOP can achieve the optimal solution. The benefit of Rand is supposed to be $\nicefrac{\mu||\textbf{g}||_{1}}{|\mathcal{V}|}$, therefore Rand can achieve good performance if the contribution index distribution is dense. The Trust method is not competitive with the other heuristic methods, as the trust sum is not the only determinant in contribution index. Actually, a node that has a high contribution index must not be influenced by many other nodes.

A part of the performance curves of SIOP, which can represent the overall performance, are shown in Figure \ref{f:IOP-STN}. We observe that different internal opinions result in very different initial overall opinions, but our method always found the best solutions in different situations. As expected, the rate of overall opinion growth decreased as the amount of modified internal opinions increased. This is because our method preferentially modifies the internal opinion of the node with the largest absolute value of contribution index in each iteration. And a greater change in the overall opinion is achieved when modifying the internal opinion of the node with a larger absolute value of contribution index. 

\begin{table}[t]
    \centering
	\begin{tabular}{ccccccc}
		\toprule
		Dataset & \makecell[c]{Avg\\benefit\\of SEOP} & \makecell[c]{SEOP\\vs\\Rand} & \makecell[c]{SEOP\\vs\\IO} & \makecell[c]{SEOP\\vs\\IOTS} \\
		\midrule
		Alpha & 192.1 & $>10 \times$ & $4.99 \times$ & $1.30 \times$ \\
		OTC & 220.6 & $>10 \times$ & $5.47 \times$ & $1.86 \times$ \\
		Elec & 65.2 & $>10 \times$ & $4.50 \times$ & $2.87 \times$ \\
		Rfa & 314.9 & $>10 \times$ & $6.92 \times$ & $3.66 \times$ \\
		\bottomrule
	\end{tabular}
	\caption{Experimental results on EOP.}
	\label{t:EOP}
\end{table}

\begin{figure}[t]
	\centering
	\includegraphics[scale=0.37]{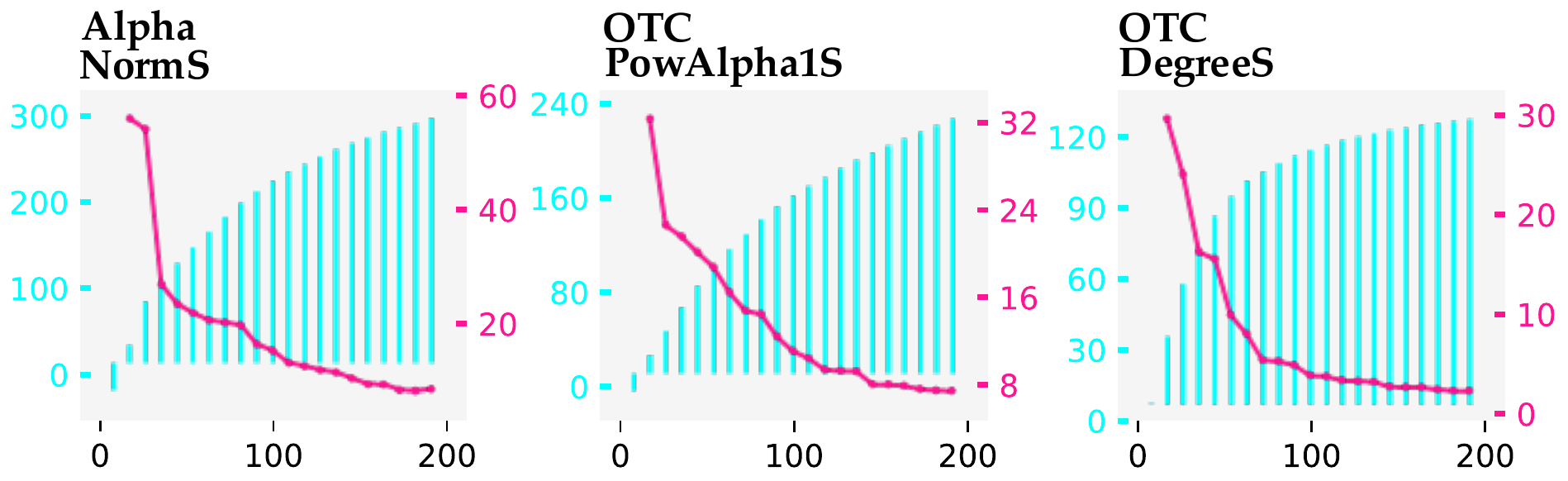}
	\caption{A part of the performance curves of SEOP.}
	\label{f:EOP-STN}
\end{figure}

\begin{table}[t!]
    \renewcommand\tabcolsep{3.5pt}
    \centering
	\begin{tabular}{ccccc}
		\toprule
		& Alpha & OTC & Elec & Rfa \\
		\midrule
		SEOP & 13.1ms & 31.1ms & 43.9ms & 93.9ms \\
		\makecell[c]{Non-optimized\\SEOP} & $4.7k \times$ & $7.3k \times$ & $>10k \times$ & $>10k \times$ \\
		\bottomrule
	\end{tabular}
	\caption{Comparison of running time (of each iteration).}
	\label{t:EOPTime}
\end{table}

\subsection{Solving Expressed Opinion Problem}
We then evaluated our method for solving EOP. From the experimental results in Table \ref{t:EOP}, we observe that our method still has an overwhelming advantage over the heuristic methods. Rand performs badly in EOP, as only a few nodes are worth intervening to get a high benefit, but it is difficult to select out these nodes by a random heuristic. Compared to IO, IOTS is a competitive heuristic method, as it considers the internal opinion and trust sum simultaneously. But there is a big gap between SEOP and IOTS, because the high-order network structure is also closely related to the benefit of fixing the expressed opinion of a node, except for the trust sum based on 1st-order structure.

A part of performance curves of SEOP, which can represent the overall performance, are shown in Figure \ref{f:EOP-STN}. We observe that in all the datasets, we can fix the expressed opinions of a few users to get very large benefits on the overall opinion. But with increasing iterations, the benefit of fixing the expressed opinion of a user declines quickly at first, and then slowly.

We also compare the running time of SEOP and SEOP without optimization (see Table \ref{t:EOPTime}).\footnote{On a server with an Intel i9-9820x CPU and 64 GB RAM.} Note that the number of iterations equals the number of nodes whose expressed opinions are fixed. In the largest network Rfa, the total running time of SEOP is about $19s$ ($\mu =200$). It is clear that the integration of vast calculations makes our method efficient.

\section{Conclusion}
In this study, we formalized two novel problems for opinion maximization in social trust networks and proposed two matrix-based methods to solve these problems. In the internal opinion problem, we observed that nodes play very different roles in the social game. We defined the contribution index, with which we could easily achieve the optimal solution. For the expressed opinion problem, we proved its hardness and designed a greedy method to achieve an acceptable solution. We were able to integrate a vast number of calculations by crafty matrix operations to create an efficient method. In real-world applications, we may have to consider numerator factors to calculate the cost of a modification, but this can be easily combined with our methods.

\section*{Acknowledgments}
This work was supported in part by the National Natural Science Foundation of China (No. 61976162, 61976161), the Key Projects of Guangdong Natural Science Foundation (No. 2018B030311003), ARC DECRA (No. DE200100964), MQRSG (No. 95109718), and Investigative Analytics Collaborative Research Project between Macquarie University and Data61 CSIRO.

\bibliographystyle{named}
\bibliography{ijcai20}

\end{document}